\newtheorem{theorem}{Theorem}[section]
\newtheorem{lemma}[theorem]{Lemma}
\theoremstyle{definition}
\theoremstyle{remark}
\def\BibTeX{{\rm B\kern-.05em{\sc i\kern-.025em b}\kern-.08em
    T\kern-.1667em\lower.7ex\hbox{E}\kern-.125emX}}
\begin{document}
\title{Low Entropy Communication in Multi-Agent Reinforcement Learning}

\author{\IEEEauthorblockN{Lebin Yu, Yunbo Qiu, Qiexiang Wang, Xudong Zhang, Jian Wang}
\IEEEauthorblockA{Department of Electronic Engineering, Tsinghua University. Beijing 100084, China \\
Email: \{yulb19, qyb18, wqx19\}@mails.tsinghua.edu.cn, \{zhangxd, jian-wang\}@tsinghua.edu.cn}
}

\maketitle

\begin{abstract}
Communication in multi-agent reinforcement learning has been drawing attention recently for its significant role in cooperation. However, multi-agent systems may suffer from limitations on communication resources and thus need efficient communication techniques in real-world scenarios. According to the Shannon-Hartley theorem, messages to be transmitted reliably in worse channels require lower entropy. Therefore, we aim to reduce message entropy in multi-agent communication. A fundamental challenge is that the gradients of entropy are either 0 or $\infty$, disabling gradient-based methods. To handle it, we propose a pseudo gradient descent scheme, which reduces entropy by adjusting the distributions of messages wisely. We conduct experiments on two base communication frameworks with six environment settings and find that our scheme can reduce message entropy by up to 90$\%$ with nearly no loss of cooperation performance.
\end{abstract}

\section{Introduction}
Over these years, multi-agent reinforcement learning (MARL) has been attracting increasing attention for its broad applications in cooperative tasks, such as robots navigation \cite{han2020cooperative}, traffic lights control \cite{calvo2018heterogeneous} and large-scale fleet management \cite{lin2018efficient}. To promote the cooperation of agents, a few researchers have designed communication frameworks among agents which adopt neural networks to generate messages \cite{foerster2016learning, ahilan2020correcting, chu2019multi, kim2020communication, wang2021tom2c}.  

However, many multi-agent communication frameworks use cooperation performance as the only metric without considering communication efficiency. Therefore, they generate and send messages at will, making them impractical in scenarios where communication resources are limited \cite{NEURIPS2020_72ab54f9, serra2020whom, sun2020scaling}. To handle this problem, some researchers design efficient multi-agent communication protocols, which can be divided into two categories. The first is decreasing communication times \cite{ma2021learning, kim2018learning, vijay2021minimizing, huang2022importance, hu2021event}, including wisely choosing communication timing and partners. The second is reducing the entropy of generated messages. The motivation is that messages with lower entropy need less bandwidth and thus can be reliably transmitted over worse communication channels according to the Shannon-Hartley Theorem \cite{shannon1948mathematical}. Most learning-based multi-agent communication frameworks use real-valued continuous vectors to communicate, and hence current methods that reduce message entropy all focus on differential entropy\footnote{Following \cite{cover1999elements}, we use discrete entropy to denote the entropy of discrete variables and differential entropy to denote the entropy of continuous variables for ease of reading.} \cite{wang2019learning, wang2020learning, zhang2020succinct}. 

Nevertheless, these methods have three defects. Firstly, some of them rely on specially designed architectures to minimize message entropy \cite{zhang2020succinct}, impairing their generalizability. Secondly, differential entropy is hard to estimate without prior information, and some methods simply treat the message distributions as single Gaussian, which may be far from reality. Thirdly, reducing differential entropy is less significant than reducing discrete entropy of quantized messages. This is because continuous variables must be quantized to discrete variables before being transmitted in a modern communication system, making discrete entropy much more important than differential entropy in efficient communication research.  

In this paper, we propose a scheme, Discrete Entropy Minimization (abbreviated as DisEM), that can be applied to common learning-based multi-agent communication frameworks and reduce the discrete entropy of quantized messages with nearly no cooperation performance decline. The core problem is that quantization truncates gradients and makes all gradient-based training algorithms infeasible. To overcome this challenge, we put forward a novel pseudo gradient descent method that reduces discrete entropy by adjusting the distributions of messages according to well-designed pseudo gradients. We also theoretically prove its effectiveness. An intuitive description of how our DisEM changes the message distribution is that it makes message variables move from less popular quantization intervals to adjacent more popular ones. As a result, DisEM does not change the message distribution too much and hence manages to reduce entropy with little performance degradation. To empirically illuminate DisEM's effectiveness, we conduct experiments in three communication-critical multi-agent tasks with six settings in total. Meanwhile, we apply our scheme to two common multi-agent communication frameworks, IC3NET \cite{singh2018learning} and TARMAC \cite{das2019tarmac}, to manifest its generalizability. Experiments show that DisEM can reduce up to $90\%$ entropy without cooperation performance degradation in some settings. 

\section{System model}

\subsection{Multi-agent reinforcement learning with communication}
We consider a partially observable $n$-agent Markov game \cite{littman1994markov} with communication among agents. This process can be described with a tuple $\langle S, A, R, T, O, \Omega, M_S, n, \gamma \rangle$, where $S$ denotes the state space of the environment, $A$ denotes the set of available actions, $R$ is the reward function $R : S \times A \to \mathbb{R}$, $T$ is the transition function $T: S \times A \to S$, $O$ is the observation space of agents, $\Omega$ is the observation function for agents: $\Omega: S \to O$, $M_S$ denotes the message space, $n$ represents the number of agents, and $\gamma$ is the discount factor.
In a basic MARL framework with communication, an agent $i$ needs a policy $\pi^{i}$ and a message generator $g^{i}$ to finish tasks. At timestep $t$, agent $i$ firstly obtains observation $o^{i}_t$ from the environment and receives messages $\vec{m}^{recv,i}_{t-1}$ from others:
\begin{equation*}
    \vec{m}^{recv,i}_{t-1} = \{\vec{m}^{1}_{t-1}, ..., \vec{m}^{i-1}_{t-1}, \vec{m}^{i+1}_{t-1}, ...,\vec{m}^{n}_{t-1}\}
\end{equation*}
Secondly, it makes decisions $a^{i}_t = \pi^{i}(o^{i}_t, \vec{m}^{recv,i}_{t-1})$ and broadcasts message $\vec{m}^{i}_{t} =g^{i}(o^{i}_t, \vec{m}^{recv,i}_{t-1}) $ to others. Finally, it gets rewards $r^{i}_t$ from the environment. We visualize how agent $i$ interacts with other agents and the environment in Fig.~\ref{fig:agent}.

\begin{figure}[t]
\centering
\includegraphics[width=0.48\textwidth]{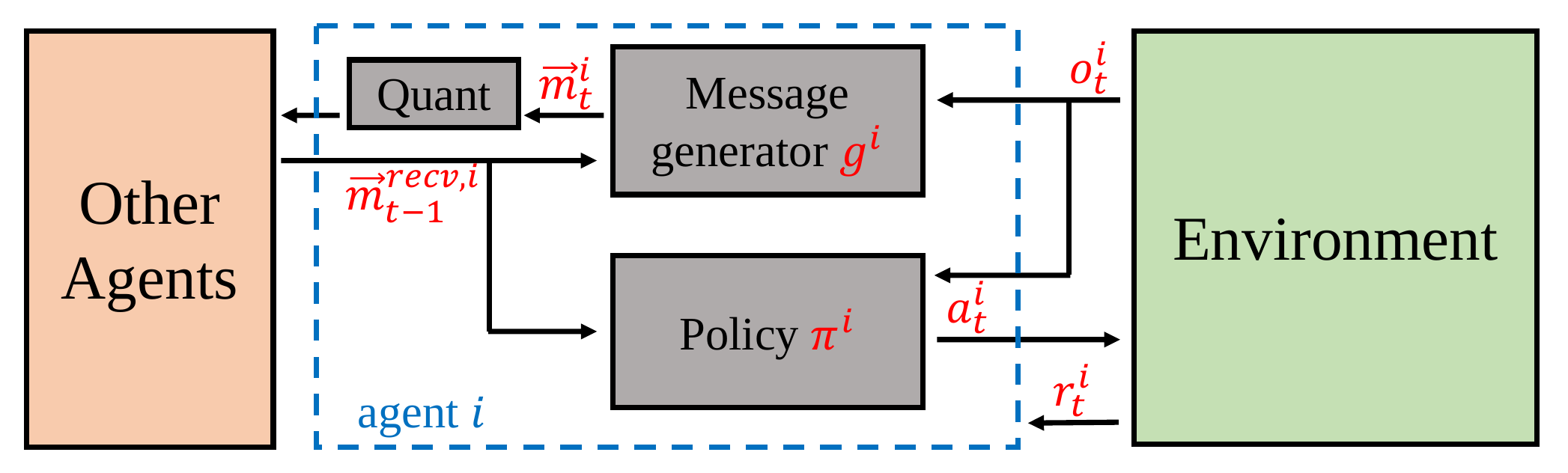}
\caption{How agent $i$ interacts with other agents and the environment.}
\label{fig:agent}
\end{figure}

In our paper, we focus on learning-based multi-agent communication frameworks where $g^{i}$ is implemented through neural networks and trained with gradient-based methods. In this case, we can use $\theta_i$ to denote the parameters of its policy and message generator, and the training goal is to maximize the objective function $J(\theta_i)$:
\begin{equation}
    J(\theta_i) = \mathbb{E}_{\theta_i}\left[ \sum_{t=0}^{\infty} \gamma^t r^{i}_t \right]
\end{equation}
With the help of policy gradient methods \cite{williams1992simple}, $\theta_i$ can be updated based on the gradients of the objective function $\nabla_{\theta_i} J(\theta_i)$.

\subsection{Communication setting}
In our system, messages generated by agents are firstly quantized and then broadcast to others. Specifically, we set the output range of agents' message generators to $[-1,1]$ and use a uniform quantization function $f^Q(x)$ with quantization interval length $\Delta$: 
\begin{equation}
   \label{eqa:quant}
f^Q(x) = k\Delta-1,\ \forall x \in [(k-0.5)\Delta-1 ,(k+0.5)\Delta-1) 
\end{equation}
where $k \in \{0,1,...,K\}$, $K=2/\Delta$ and $K+1$ is the number of quantization intervals. 

In the main experiments, we assume the information transmission is lossless because we aim to reduce the entropy of information sources (i.e. message generators), which is independent of channel conditions. 

\section{Methods}
\subsection{Entropy of quantized messages}
To formulate how entropy of quantized messages is calculated, we define $h_k(\cdot)$ below:
\begin{equation}
   h_k(x) = \left\{
   \begin{aligned}
   1,& \ x\in [(k-0.5)\Delta-1 ,(k+0.5)\Delta-1)\\
   0,& \ x\notin [(k-0.5)\Delta-1 ,(k+0.5)\Delta-1)
   \end{aligned}
   \right.
\end{equation}
Without loss of generality, we firstly focus on the setting where the message length is 1, which means one piece of message is a number instead of a vector. Given a message set $M=\{m_1,m_2,...,m_N\}$, the entropy of quantized messages is calculated below:
\begin{equation}
   H(M) = -\sum_{k=0}^K \frac{\epsilon+\sum_{i=1}^N h_k(m_i)}{N}\log \frac{\epsilon+\sum_{i=1}^N h_k(m_i)}{N}
\end{equation}
where $\epsilon$ is a small number to avoid $\log 0$ in calculation. If the message length is more than $1$, the entropy can be calculated by summing up the entropy of each digit.

\subsection{Reduce entropy without harming cooperation}
We intend to reduce the message entropy of gradient-based multi-agent communication frameworks without harming performance. In specific, for agent $i$, firstly we use the baseline framework to train $\theta_i$ until $\pi^i, g^i$ are well-trained. Secondly, we try to reduce the entropy of messages without harming cooperation performance. A common method is to add a regularizer \cite{tishby2000information, wang2020learning} to the original training objective:
\begin{equation}
   \max_{\theta_i} J'(\theta_i) = J(\theta_i) - \alpha H(M_i)
\end{equation}
where $\alpha$ is the weight of the regularizer and its value depends on the trade-off between maximizing $J(\theta_i)$ and minimizing $H(M_i)$. Then the gradient of $\theta_i$ with respect to this new objective becomes:
\begin{equation}
   \nabla_{\theta_i} J'(\theta_i) = \nabla_{\theta_i}J(\theta_i) - \alpha \nabla_{\theta_i}H(M_i)
\end{equation}

Note that the gradient of $h_k(\cdot)$ is either $0$ or $\infty$. Consequently, $\nabla_{\theta_i}H(M_i)$ is either $0$ or $\infty$. This disables gradient-based training methods and thus makes $H(M_i)$ hard to be reduced. We propose a pseudo gradient descent method to handle this problem.

\begin{figure*}[htbp]
    \centering
    \begin{minipage}[t]{0.44\textwidth}
        \centering
        \includegraphics[width=1\textwidth]{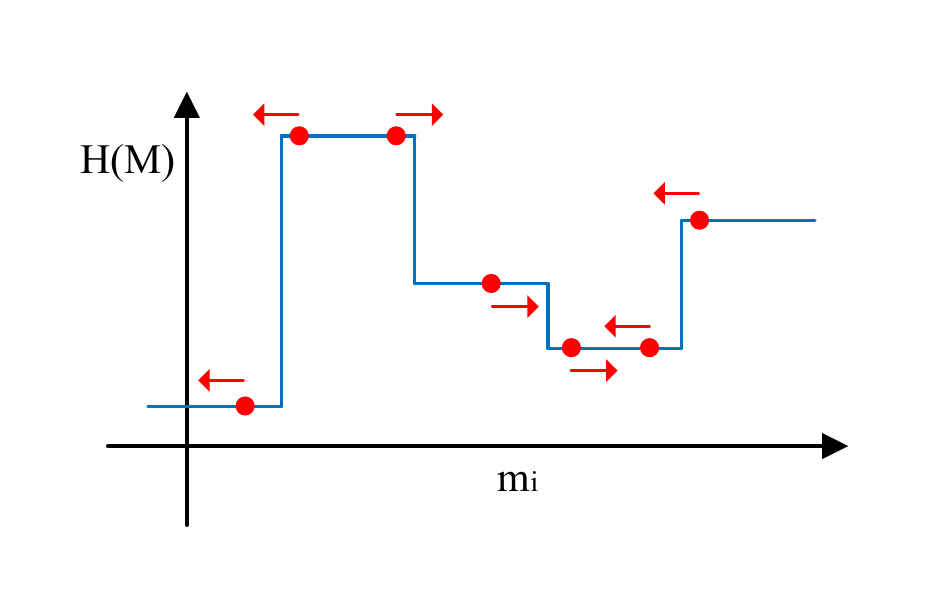}
        \centerline{(a)}
        \end{minipage}
    \begin{minipage}[t]{0.54\textwidth}
        \centering
        \includegraphics[width=1\textwidth]{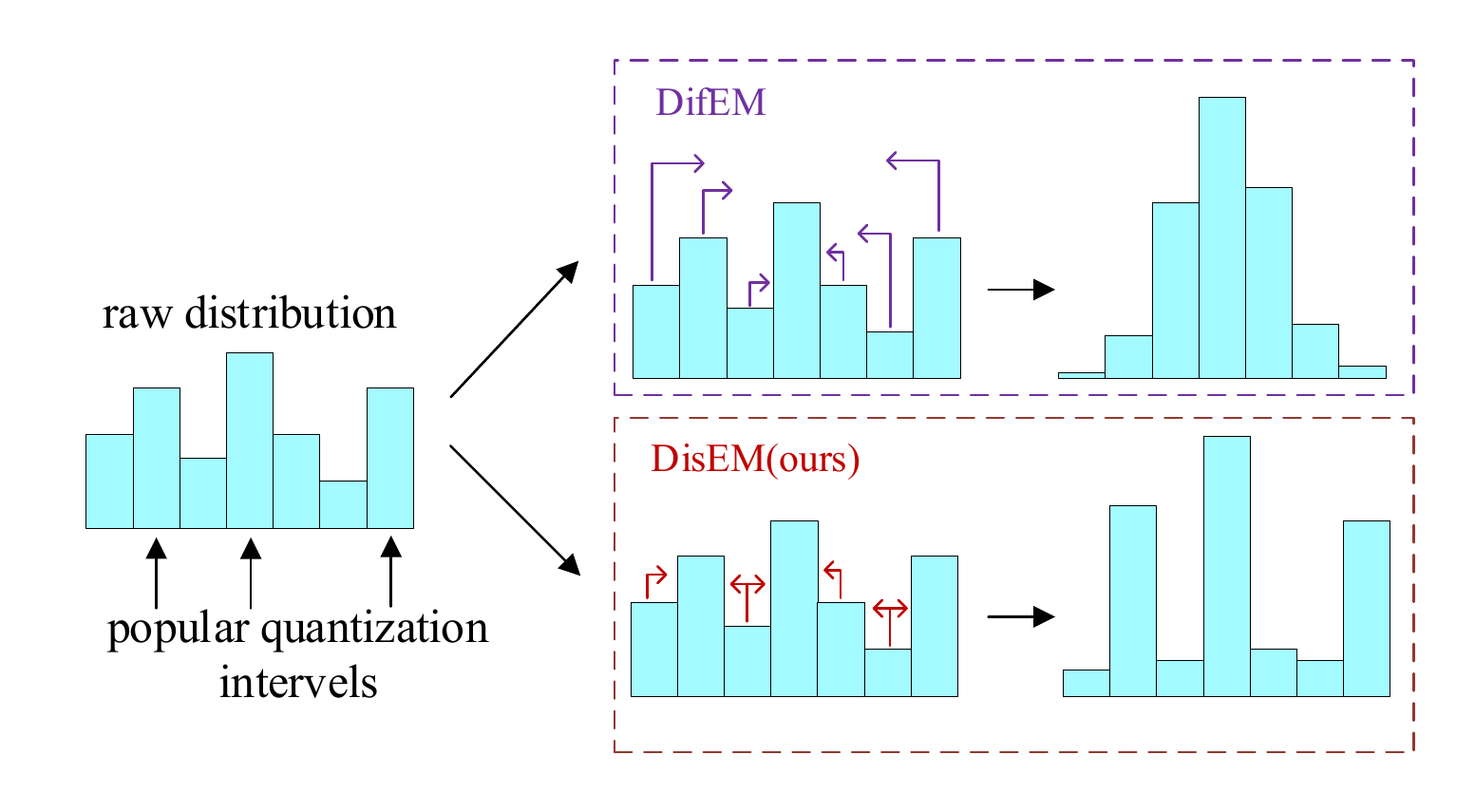}
        \centerline{(b)}
        \end{minipage}
    \caption{(a) How our pseudo gradient descent method reduces $H(M)$ by adjusting $m_i$. The x-coordinate of a red dot represents a possible value for $m_i$, and the y-coordinate of it represents the corresponding value for $H(M)$. Red arrows indicate directions of pseudo gradient descent. Pseudo gradient descent makes $m_i$ move to a direction that might reduce $H(M)$. (b) How our scheme (Discrete Entropy Minimization, abbreviated as DisEM) and traditional scheme (Differential Entropy Minimization, abbreviated as DifEM) change the distributions of messages. The y-axis represents probability and x-axis represents the value of message variables. DisEM reduces entropy by moving values of messages from less popular quantization intervals to adjacent more popular ones, while DifEM simply makes the distribution more like a low variance Gaussian distribution.}
    \label{fig:showre}
\end{figure*}

\subsection{Reduce entropy with pseudo gradient descent}
$H(M)$ can be treated as a multivariate function with $N$ variables $H(m_1,m_2,...,m_N)$, and in this part we focus on how to reduce $H(M)$ by adjusting $m_i$. To start with, we present the core function of gradient descent methods: given a continuously differentiable function $f(x_1,x_2,...,x_n)$ and $\eta \to 0$,

\begin{equation}
    \label{eqa:gd}
    \begin{split}
        x'_i = x_i& - \eta \nabla_{x_i}f \\
        f(x_1,...,x'_i,...,f_n) &\leq f(x_1,...,x_i,...,f_n)
    \end{split}
\end{equation}

Since $\nabla_{m_i}H(M)= \sum_k\nabla_{h_k}H(M)\nabla_{m_i}h_k =0$  or $\infty$, we cannot use gradient descent to minimize $H(M)$. Therefore, we try to design a pseudo gradient $\nabla^p_{m_i}H(M)$ to achieve a similar effect to (\ref{eqa:gd}). 

Our core idea is to replace $\nabla_{m_i}h_k$ with $s_k(m_i)$:
\begin{equation}
    s_k(x) = \left\{
    \begin{aligned}
    1,& \ x\in ((k-1)\Delta-1 ,k\Delta-1) \\
    -1,& \ x\in (k\Delta-1 ,(k+1)\Delta-1) \\
    0, & \ x=k\Delta-1
    \end{aligned}
    \right.
 \end{equation}
and the expression for pseudo gradient $\nabla^p_{m_i}H(M)$ is :
\begin{equation}
    \nabla^p_{m_i}H(M) = \sum_k\nabla_{h_k}H(M)s_k(m_i)
\end{equation}
Next, we show that pseudo gradient descent has a similar property to gradient descent. Given $H(M)=H(m_1,...,m_i,...,m_n)$ and $\eta \to 0$,
\begin{equation}
    \label{eqa:core}
    \begin{split}
        m'_i = m_i &- \eta \nabla^p_{m_i}H(M) \\
        H(m_1,...,m'_i,...,m_N) &\leq H(m_1,...,m_i,...,m_n)
    \end{split}
\end{equation}
The proof is presented in the next subsection. 

We visualize how our pseudo gradient descent method adjusts $m_i$ in Fig.~\ref{fig:showre}(a) and how our scheme and previous schemes (those that aim to minimize differential entropy) change the distribution of messages in Fig.~\ref{fig:showre}(b). In brief, our scheme reduces entropy by moving the values of messages from less popular quantization intervals to adjacent more popular ones and does not change the messages too much. As a comparison, traditional schemes simply make the distribution more like a low variance Gaussian distribution regardless of what the original distribution is.
\begin{figure*}[htbp]
    \centering
    \begin{minipage}[t]{0.32\textwidth}
    \centering
    \includegraphics[width=1\textwidth]{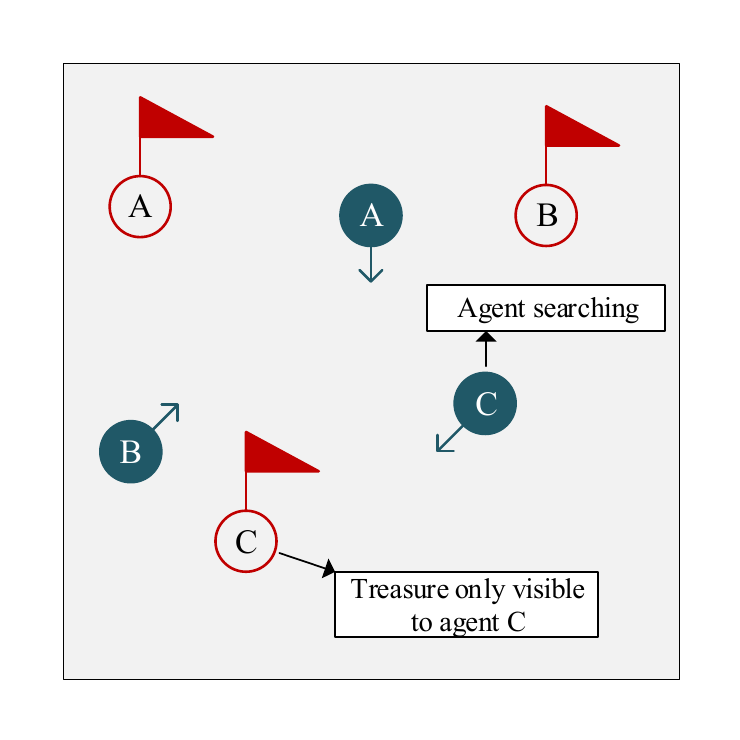}
    \end{minipage}
    \begin{minipage}[t]{0.32\textwidth}
    \centering
    \includegraphics[width=1\textwidth]{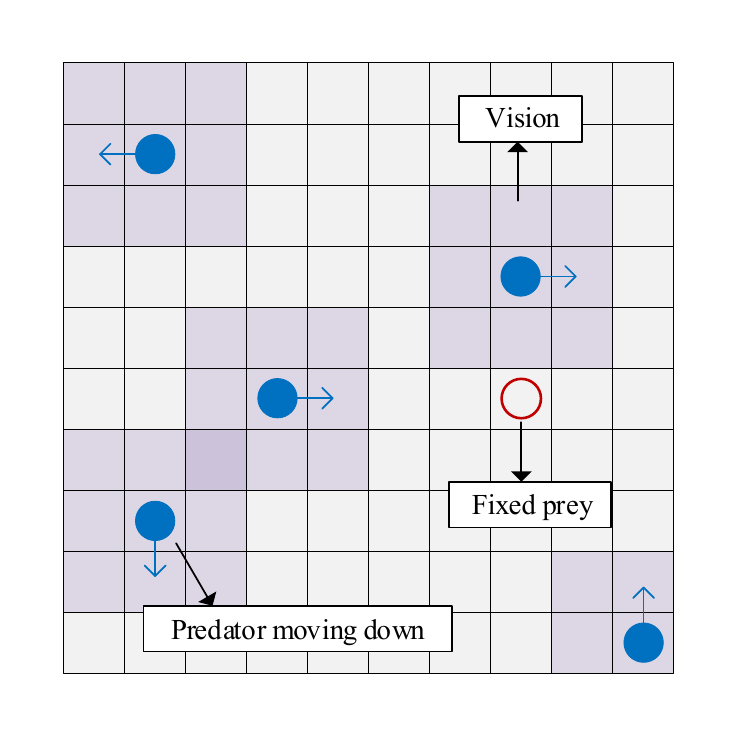}
    \end{minipage}
    \begin{minipage}[t]{0.32\textwidth}
    \centering
    \includegraphics[width=1\textwidth]{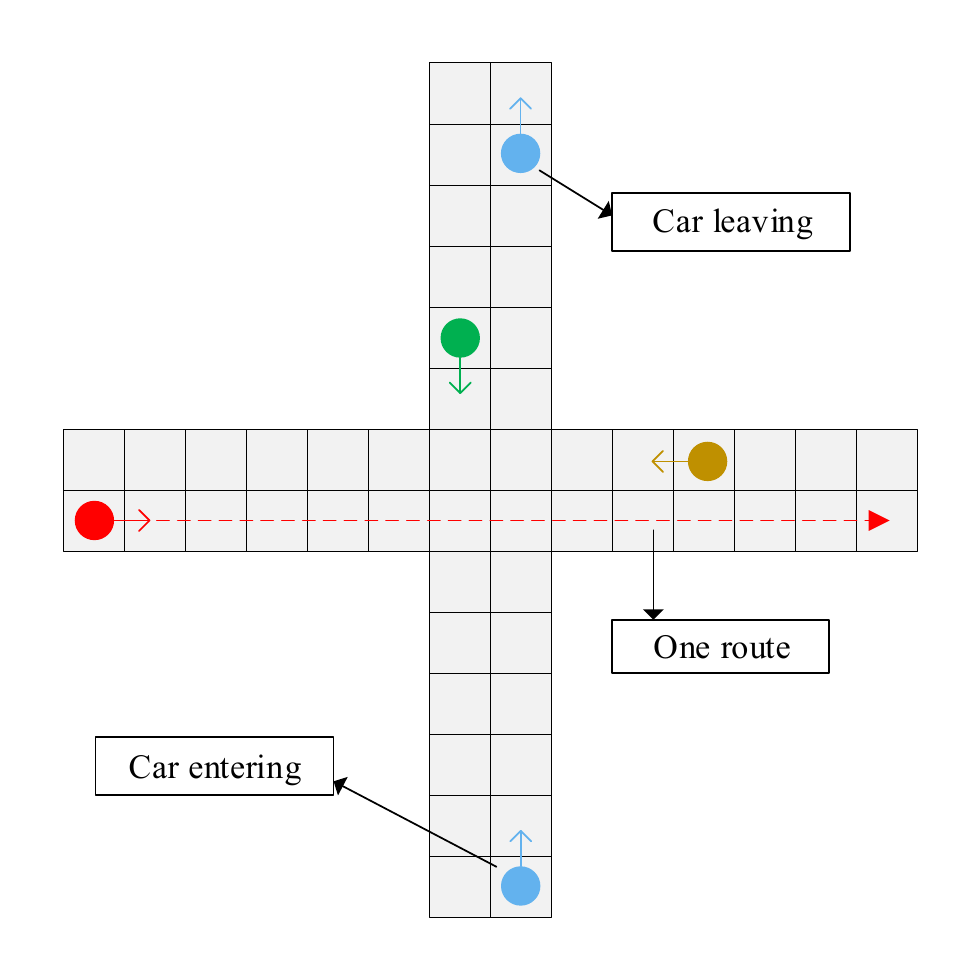}
    \end{minipage}
    \caption{Visualizations of our three experiment environments: Treasure Hunt, Predator Prey, and Traffic Junction.}
    \label{fig:env}
    \end{figure*}
\subsection{Proof of (\ref{eqa:core})}
We set $N_k = \epsilon+\sum_{i=1}^N h_k(m_i)$ for brevity. Note that $N_k$ represents the number of message variables that fall into the quantization interval $[(k-0.5)\Delta-1, (k+0.5)\Delta-1)$. To prove (\ref{eqa:core}), we present three lemmas:
\begin{lemma}
    \label{lemma:1}
    If $m_i \in (u\Delta-1, (u+1)\Delta-1)$, then
    \begin{equation}
        \begin{split}
            \nabla^p_{m_i}H(M) &> 0, \ if \ N_u > N_{u+1} \\
            \nabla^p_{m_i}H(M) &< 0, \ if \ N_u < N_{u+1} \\
            \nabla^p_{m_i}H(M) &= 0, \ if \ N_u = N_{u+1}
        \end{split}
    \end{equation}
\end{lemma}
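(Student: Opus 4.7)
The plan is to reduce $\nabla^p_{m_i}H(M)$ to a clean closed-form expression in $N_u$ and $N_{u+1}$, from which the three sign statements fall out immediately. First I would rewrite $H(M)$ as a function of the counts $N_k = \epsilon + \sum_i h_k(m_i)$, namely $H(M) = -\sum_k \frac{N_k}{N}\log\frac{N_k}{N}$, so that the partial derivative $\nabla_{h_k}H(M)$ (interpreted as differentiation with respect to $N_k$) is just $-\frac{1}{N}\bigl(\log\frac{N_k}{N}+1\bigr)$, a standard entropy derivative computation.

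Next I would use the hypothesis $m_i \in (u\Delta-1,(u+1)\Delta-1)$ to determine which terms of the sum $\nabla^p_{m_i}H(M) = \sum_k \nabla_{h_k}H(M)\,s_k(m_i)$ actually survive. Checking the definition of $s_k$: the value $s_k(m_i)=+1$ requires $m_i\in((k-1)\Delta-1,k\Delta-1)$, which forces $k=u+1$; the value $s_k(m_i)=-1$ requires $m_i\in(k\Delta-1,(k+1)\Delta-1)$, which forces $k=u$; for all other $k$, $s_k(m_i)=0$. So only two terms remain in the sum.

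Substituting these two surviving terms gives
\begin{equation*}
\nabla^p_{m_i}H(M) = -\frac{1}{N}\!\left(\log\frac{N_{u+1}}{N}+1\right) + \frac{1}{N}\!\left(\log\frac{N_u}{N}+1\right) = \frac{1}{N}\log\frac{N_u}{N_{u+1}}.
\end{equation*}
Since $N_u,N_{u+1}>0$ (the $\epsilon$ guarantees positivity), the logarithm is positive, negative, or zero exactly when $N_u$ is greater than, less than, or equal to $N_{u+1}$, which is precisely the three cases asserted by the lemma.

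There is no serious obstacle here; the argument is a direct calculation. The only point that requires a moment of care is the bookkeeping in the second step, where one must verify that the strict open interval containing $m_i$ is exactly the interval on which $s_u$ takes the value $-1$ and $s_{u+1}$ the value $+1$, and that for every other $k$ neither of the two non-zero cases of $s_k$ is triggered, so that the otherwise large sum collapses to two terms.
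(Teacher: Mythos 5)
Your proposal is correct and follows essentially the same route as the paper: compute $\nabla_{h_k}H(M) = -\frac{1}{N}\bigl(1+\log\frac{N_k}{N}\bigr)$, observe that only $s_u(m_i)=-1$ and $s_{u+1}(m_i)=+1$ survive, and collapse the sum to $\frac{1}{N}\log\frac{N_u}{N_{u+1}}$, which is exactly the paper's $-\frac{1}{N}\log\frac{N_{u+1}}{N_u}$. Your explicit remark that $\epsilon$ keeps $N_k>0$ so the logarithm is well defined is a small detail the paper leaves implicit.
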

\textbf{Remark} This lemma shows that for two adjacent quantization intervals, pseudo gradient descent method will make message variables move from the less popular one to the more popular one.
\begin{proof}
\begin{equation}
    \begin{split}
    \nabla^p_{m_i}H(M) &= \sum_{k=0}^K\nabla_{h_k}H(M)s_k(m_i) \\
                    &= -\frac{1}{N}\sum_{k=0}^K(1+\log\frac{N_k}{N})s_k(m_i)
\end{split}
\end{equation}
Note that $\forall m_i \in (u\Delta-1, (u+1)\Delta-1)$, $s_u(m_i) = -1$, $s_{u+1}(m_i) = 1$ and $s_k(m_i) = 0 \ \forall k \in \{0,1,...,K\}\setminus \{u,u+1\}$. Therefore, we get
\begin{equation}
    \nabla^p_{m_i}H(M) = -\frac{1}{N} log \frac{N_{u+1}}{N_u}
\end{equation}
Then we can lead to the equations in Lemma~\ref{lemma:1}
\end{proof}
\begin{lemma}
    \label{lemma:2}
    If $N_u>N_{u+1}$ and $m_i \in (u\Delta-1, (u+1)\Delta-1)$ is updated to $m'_i$ with $m'_i = m_i -\eta \nabla^p_{m_i}H(M), \eta \to 0\ $, this update leads to only two possible results:

    (a) $N'_k = N_k, \ \forall k \in\{ 0,1,...,K \}$

    (b) $N'_u = N_u+1$, $N'_{u+1} = N_{u+1}-1$ and $N'_k = N_k, \ \forall k \in\{ 0,1,...,K \} \setminus \{u,u+1\}$
    
    If $N_u<N_{u+1}$, similarly, this update leads to only two possible results:

    (a) $N'_k = N_k, \ \forall k \in\{ 0,1,...,K \}$

    (b) $N'_u = N_u-1$, $N'_{u+1} = N_{u+1}+1$ and $N'_k = N_k, \ \forall k \in\{ 0,1,...,K \} \setminus \{u,u+1\}$
\end{lemma}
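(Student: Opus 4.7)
The plan is to reduce the claim to a small case analysis on the position of $m_i$ within the open interval $(u\Delta - 1, (u+1)\Delta - 1)$, relying on Lemma 1 to pin down the direction of the update. For the sub-case $N_u > N_{u+1}$, Lemma 1 gives $\nabla^p_{m_i} H(M) > 0$, so $m'_i < m_i$ and, since $\eta \to 0$, $m'_i$ lies arbitrarily close to $m_i$. In particular $m'_i$ stays inside the same open interval $(u\Delta - 1, (u+1)\Delta - 1)$, so only the occupancy counts $N_u$ and $N_{u+1}$ are at risk of changing; every other $N_k$ is fixed in both claimed outcomes, which matches the statement.

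Next I would split on where $m_i$ falls relative to the mid-point $(u+0.5)\Delta - 1$, which is the boundary between quantization intervals $u$ and $u+1$ under the half-open convention used in $h_k$. If $m_i$ lies strictly in the right half of interval $u$, i.e. $m_i \in (u\Delta - 1, (u+0.5)\Delta - 1)$, a small leftward displacement keeps $m'_i$ in the same interval, giving outcome (a). If $m_i$ lies strictly in the left half of interval $u+1$, i.e. $m_i \in ((u+0.5)\Delta - 1, (u+1)\Delta - 1)$, the same argument applies and again yields outcome (a). The only remaining possibility is $m_i = (u+0.5)\Delta - 1$, which by the half-open convention is counted in $N_{u+1}$; any strictly negative shift moves $m'_i$ into interval $u$, producing $N'_u = N_u + 1$ and $N'_{u+1} = N_{u+1} - 1$, which is outcome (b).

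The sub-case $N_u < N_{u+1}$ follows by the symmetric argument: Lemma 1 now forces $\nabla^p_{m_i} H(M) < 0$, the displacement is to the right, and the same three sub-cases of the position of $m_i$ reproduce outcomes (a) or (b) with the roles of $N_u$ and $N_{u+1}$ swapped. The main subtlety, and essentially the only point that needs care, is the asymmetry introduced by the half-open convention: because the boundary $(u+0.5)\Delta - 1$ is absorbed into interval $u+1$ and not into interval $u$, the value of $m_i$ at which a crossing is triggered differs between the two sub-cases, and one has to match the direction of motion against this convention to obtain the correct sign of the change in $N_u$ versus $N_{u+1}$.
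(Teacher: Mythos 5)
Your argument is valid for the lemma as literally stated, but it takes a different route from the paper on the one point that matters: the meaning of $\eta \to 0$. You read it as ``the displacement is arbitrarily small relative to the position of $m_i$,'' so that a strictly interior $m_i$ never changes quantization interval and outcome (b) fires only when $m_i$ sits exactly on the boundary $(u+0.5)\Delta-1$ (and, by the half-open convention, never fires at all in the sub-case $N_u<N_{u+1}$). The paper instead bounds the step size explicitly: it computes $\max\bigl(\nabla^p_{m_i}H(M)\bigr)$ and requires only that $\eta\max\bigl(\nabla^p_{m_i}H(M)\bigr)<0.5\Delta$, so that $m'_i$ lands in $((u-0.5)\Delta-1,(u+1)\Delta-1)$ and at most the single boundary at $(u+0.5)\Delta-1$ can be crossed. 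Under that reading, outcome (b) occurs for any $m_i$ in the half-interval $((u+0.5)\Delta-1,(u+1)\Delta-1)$ whose step carries it across --- a set of positive measure --- which is what makes Lemma~3, and hence the whole entropy-reduction mechanism, non-vacuous. Both arguments establish that only results (a) and (b) are possible, and your case split on the position of $m_i$ relative to the midpoint is the same skeleton as the paper's; but your interpretation, if carried through the rest of the paper, would mean messages essentially never change intervals and the scheme never actually reduces entropy. You should replace ``$m'_i$ lies arbitrarily close to $m_i$'' with the paper's quantitative bound on the displacement, after which your second and third sub-cases merge into the paper's single crossing case.
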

\begin{proof}
    We firstly focus on the first case (i.e. $N_u>N_{u+1}$). Since $\nabla^p_{m_i}H(M)>0$, 
    \begin{equation}
        m'_i = m_i -\eta \nabla^p_{m_i}H(M) \in (m_i-\eta\max(\nabla^p_{m_i}H(M)), \  m_i)
    \end{equation}
    where 
    \begin{equation}
        \max(\nabla^p_{m_i}H(M)) = \frac{1}{N}\log\frac{N_s+\epsilon}{\epsilon}
    \end{equation}
    Since $\eta \to 0$, $\eta\max(\nabla^p_{m_i}H(M))<0.5\Delta$. Note that $m_i \in (u\Delta-1, (u+1)\Delta-1)$, so $m'_i \in ((u-0.5)\Delta-1, (u+1)\Delta-1)$. Consequently, there are only two possibilities for the values of $m_i$ and $m'_i$:

    (a) If $m_i \in ((u+0.5)\Delta-1, (u+1)\Delta-1)$ and $m'_i \in ((u-0.5)\Delta-1, (u+0.5)\Delta-1)$, then $N'_u = N_u+1$, $N'_{u+1} = N_{u+1}-1$ and $N'_k = N_k, \ \forall k \in\{ 0,1,...,K \} \setminus \{u,u+1\}$ 

    (b) Otherwise, $N'_k = N_k, \ \forall k \in\{ 0,1,...,K \}$

    The proof is similar in second case (i.e. $N_u<N_{u+1}$)
\end{proof}

Using $H(M)$ to denote $H(m_1,...,m_i,...,m_N)$ and $H(M')$ to denote $(m_1,...,m'_i,...,m_N)$, we propose our third lemma:
\begin{lemma}
    \label{lemma:3}
    $H(M')<H(M)$ if $N_u>N_{u+1}$, $N'_u = N_u+1$, $N'_{u+1} = N_{u+1}-1$ and $N'_k = N_k, \ \forall k \in\{ 0,1,...,K \} \setminus \{u,u+1\}$.
    
    $H(M')<H(M)$ if $N_u<N_{u+1}$, $N'_u = N_u-1$, $N'_{u+1} = N_{u+1}+1$ and $N'_k = N_k, \ \forall k \in\{ 0,1,...,K \} \setminus \{u,u+1\}$.    
\end{lemma}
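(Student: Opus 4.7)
The plan is to reduce Lemma~\ref{lemma:3} to a strict convexity inequality for the scalar function $g(x) = x \log x$ on $(0,\infty)$.

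First, I would expand $H(M) - H(M')$ starting from $H(M) = -\sum_k (N_k/N)\log(N_k/N)$. Writing $\log(N_k/N) = \log N_k - \log N$, the $\log N$ terms aggregate to $(\log N / N)\sum_k N_k$, and because the prescribed update preserves $\sum_k N'_k = \sum_k N_k$, these $\log N$ contributions cancel between $H(M)$ and $H(M')$. Since $N'_k = N_k$ for every $k \notin \{u, u+1\}$, the remaining sum collapses to only two indices:
\begin{equation*}
    H(M) - H(M') = \tfrac{1}{N}\bigl[g(N'_u) + g(N'_{u+1}) - g(N_u) - g(N_{u+1})\bigr].
\end{equation*}

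Second, for the first case set $a = N_u$ and $b = N_{u+1}$ with $a > b$. Substituting $N'_u = a+1$, $N'_{u+1} = b-1$, the claim $H(M') < H(M)$ reduces to $g(a+1) + g(b-1) > g(a) + g(b)$, which I would rearrange as $g(a+1) - g(a) > g(b) - g(b-1)$. By the mean value theorem the two sides equal $g'(\xi_1)$ and $g'(\xi_2)$ for some $\xi_1 \in (a, a+1)$ and $\xi_2 \in (b-1, b)$. Since $g'(x) = \log x + 1$ is strictly increasing on $(0,\infty)$ (equivalently $g$ is strictly convex, as $g''(x) = 1/x > 0$), and since $a > b$ yields $\xi_1 > a > b > \xi_2$, we obtain $g'(\xi_1) > g'(\xi_2)$, establishing the strict inequality. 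The second case is the same statement after swapping the roles of $u$ and $u+1$.

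The main obstacle I would need to dispatch is checking that $g$ is evaluated only on positive arguments, the delicate term being $g(b-1)$. By Lemma~\ref{lemma:2}, the nontrivial update considered here triggers only when $m_i$ actually crosses the boundary at $(u+0.5)\Delta - 1$, which means $m_i$ lay in the very bin whose count is being decremented, so its underlying integer occupancy $\sum_j h_{u+1}(m_j)$ was at least $1$ before the move. Combined with the fact that each $N_k$ already carries the offset $\epsilon > 0$, this guarantees $b - 1 \geq \epsilon > 0$, keeping every evaluation of $g$ inside its domain. With this one technicality verified, the convexity step above closes the proof.
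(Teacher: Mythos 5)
Your proof is correct and follows essentially the same route as the paper's: isolate the two affected bins, reduce to $g(a+1)-g(a) > g(b)-g(b-1)$ for $g(x)=x\log x$, and conclude via the mean value theorem and strict convexity of $g$ (you work with raw counts and cancel the $\log N$ terms, the paper works with the normalized frequencies $N_k/N$ and step $\delta = 1/N$, which is the same computation). Your explicit check that $N_{u+1}-1 \geq \epsilon > 0$, so that $g$ and $g'$ are only evaluated on positive arguments, is a detail the paper's proof leaves implicit, and it is handled correctly.
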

\begin{proof}
    We firstly focus on the first case (i.e. $N_u>N_{u+1}$).

    \begin{equation}
    \begin{split}
        H(M')-H(M) = \frac{N_u}{N}\log\frac{N_u}{N}+\frac{N_{u+1}}{N}\log\frac{N_{u+1}}{N}\\ -  \frac{N_u+1}{N}\log\frac{N_u+1}{N}-\frac{N_{u+1}-1}{N}\log\frac{N_{u+1}-1}{N}
    \end{split}
    \end{equation}
    For brevity, we set $x_1 = \frac{N_u}{N}$, $x_2 = \frac{N_{u+1}}{N}$, $\delta=\frac{1}{N}$ and $f(x)=x\log x$. Then we get:
    \begin{equation}
        H(M')-H(M) = f(x_2)-f(x_2-\delta) - (f(x_1+\delta)-f(x_1))
    \end{equation}
    According to the mean value theorem, there exists $x_{1m} \in (x_1,x_1+\delta)$ and $x_{2m}\in(x_2-\delta,x_2)$ such that $f(x_2)-f(x_2-\delta) = \delta f'(x_{2m})$ and $f(x_1+\delta)-f(x_1) = \delta f'(x_{1m})$. Since $f''(x)>0$ and $x_{2m}<x_2<x_1<x_{1m}$, $H(M')-H(M)<0$. The proof is similar in the second case.
\end{proof}
From Lemma~\ref{lemma:2} and Lemma~\ref{lemma:3} we conclude that, if $H(M')\neq H(M)$, $H(M')<H(M)$. Therefore, we obtain $H(M') \leq H(M)$.

\subsection{Implementation details}
Following policy gradient methods and our proposed pseudo gradient descent scheme, the parameters $\theta_i$ of agent $i$ can be trained based on the following gradient expression:
\begin{equation}
    \nabla_{\theta_i} J'(\theta_i) = \nabla_{\theta_i}J(\theta_i) - \alpha \nabla^p_{\theta_i}H(M_i)
 \end{equation}
Besides, we first train the agents without the entropy regularizer (i.e. set $ \alpha=0 $) to make them optimize their policies and message generators. After $T_{N}$ epochs, we add the regularizer (i.e. set $ \alpha=\alpha_p $) and continue to train them for another $T_{max}-T_{N}$ epochs. $T_N, \alpha_p$ and $T_{max}$ are predefined hyperparameters, whose values depend on specific tasks. As for the quantization function, we set $\Delta=0.25$.

\section{Experiments}

\begin{table*}[htbp]
  \caption{Cooperation performance and message entropy of eight communication frameworks in six settings. }
\label{table:mainresults}
\begin{center}
\begin{small}
\begin{sc}
\resizebox{\textwidth}{!}{
 \begin{tabular}{cl|cccc|cccc}
 \hline
   &   & \multicolumn{4}{c|}{IC3NET-based} & \multicolumn{4}{c}{TARMAC-based} \\
   &   & ORI & ZC & DifEM & DisEM & ORI & ZC & DifEM & DisEM \\
 \hline
 \multirow{2}[2]{*}{TH-A} & timesteps↓ & 8.3 $\pm$0.3 & 19.7 $\pm$0.1 & 9.1 $\pm$0.4 & 9.0 $\pm$0.3 & 8.2 $\pm$0.2 & 19.7 $\pm$0.2 & 10.0 $\pm$0.5 & 9.2 $\pm$0.2 \\
   & entropy↓ & 139 $\pm$7 & 0 & 70 $\pm$2.8 & 15 $\pm$1 & 70 $\pm$1 & 0 & 29 $\pm$1 & 9 $\pm$1 \\
 \hline
 \multirow{2}[2]{*}{TH-B} & timesteps↓ & 24.4 $\pm$1.0 & 59.9 $\pm$0.0 & 42.9 $\pm$6.5 & 24.1 $\pm$0.7 & 21.0 $\pm$0.6 & 59.9 $\pm$0.0 & 36.3 $\pm$6.0 & 22.8 $\pm$0.4 \\
   & entropy↓ & 127 $\pm$4 & 0 & 39 $\pm$26 & 30 $\pm$1 & 72 $\pm$4 & 0 & 25 $\pm$6 & 15 $\pm$1  \\
 \hline
 \multirow{2}[2]{*}{PP-A} & timesteps↓& 10.1 $\pm$0.1 & 17.4 $\pm$0.5 & 14.4 $\pm$2.5 & 10.7 $\pm$0.4 & 10.0 $\pm$0.2 & 17.5 $\pm$0.2 & 11.3 $\pm$0.4 & 10.4 $\pm$0.1  \\
   & entropy↓ & 94 $\pm$3 & 0 & 20 $\pm$14 & 9 $\pm$1 & 60 $\pm$3 & 0 & 13 $\pm$1 & 6 $\pm$1  \\
 \hline
 \multirow{2}[2]{*}{PP-B} & timesteps↓ & 16.5 $\pm$1.4 & 33.1 $\pm$0.7 & 17.5 $\pm$0.5 & 16.3 $\pm$0.7 & 16.0 $\pm$1.6 & 33.6 $\pm$0.5 & 16.8 $\pm$0.5 & 15.4 $\pm$0.5  \\
   & entropy↓ & 122 $\pm$11 & 0 & 67 $\pm$6 & 27 $\pm$3 & 58 $\pm$5 & 0 & 25 $\pm$3 & 14 $\pm$1  \\
 \hline
 \multirow{2}[2]{*}{TJ-A} & success rates↑ & 0.87 $\pm$0.13 & 0.29 $\pm$0.02 & 0.66 $\pm$0.05 & 0.85 $\pm$0.11 & 0.77 $\pm$0.18 & 0.28 $\pm$0.02 & 0.71 $\pm$0.04 & 0.74 $\pm$0.21  \\
   & entropy↓ & 141 $\pm$10 & 0 & 96 $\pm$26 & 83 $\pm$10 & 27 $\pm$4 & 0 & 27 $\pm$13 & 27 $\pm$4  \\
 \hline
 \multirow{2}[2]{*}{TJ-B} & success rates↑ & 0.95 $\pm$0.02 & 0.74 $\pm$0.02 & 0.73 $\pm$0.02 & 0.93 $\pm$0.01 & 0.95 $\pm$0.02 & 0.72 $\pm$0.02 & 0.90 $\pm$0.09 & 0.95 $\pm$0.01  \\
   & entropy↓ & 75 $\pm$30 & 0 & 77 $\pm$32 & 38 $\pm$29 & 44 $\pm$2 & 0 & 22 $\pm$10 & 14 $\pm$6  \\
 \hline
 \end{tabular}%

}

\end{sc}
\end{small}
\end{center}
\end{table*}%

\subsection{Environments}
We consider three environments, each with two settings, for demonstration purposes: Treasure Hunt (TH), Predator Prey (PP), and Traffic Junction (TJ)\cite{singh2018learning}, which are visualized in Fig.~\ref{fig:env}. They are all communication-critical, where reducing message entropy is meaningful.  

\textbf{Treasure Hunt} In this task, $N$ agents work together to hunt treasures in the field. Each agent obtains the coordinates of its treasure, which is invisible to others. Note that an agent cannot collect its treasure by itself. Instead, it should help others hunt it through communication. The field size is $1\times1$, and an agent can move in eight directions at speed $v$. One episode ends if all treasures are found or the timestep reaches the upper limit $t_{max}$. Therefore, smaller timesteps indicate better performance. In setting A (TH-A), $N=3, v=0.15$ and $t_{max}=20$. In setting B (TH-B), $N=6, v=0.09$ and $t_{max}=60$. As for the training hyperparameters, $T_N=100$, $\alpha_p=0.2$ and $T_{max}=200$.

\textbf{Predator Prey} \cite{singh2018learning} In this task, $N$ agents with limited vision are required to reach a fixed prey in a grid world of size $D \times D$. One episode ends if all agents reach the prey or the timestep reaches the upper limit $t_{max}$. Therefore, smaller timesteps indicate better performance. In setting A (PP-A), $N=3$, $D=5$, $t_{max}=20$ and the vision is set to $0$. In setting B (PP-B), $N=5$, $D=10$, $t_{max}=40$ and the vision is set to $1$.  As for the training hyperparameters, $T_N=100$, $\alpha_p=0.05$ and $T_{max}=150$.

\textbf{Traffic Junction} \cite{sukhbaatar2016learning} In this task, cars enter a junction from all entry points with a probability $p_{arr}$. The maximum number of cars present is set to $N$. Each car is assigned a fixed route, and they have two options at each step: move along the route or stay. Cars are required to finish routes quickly without collisions. If no collision happens after $t_{max}$, this episode is counted as a success, or else a failure. It is worth noticing that a car only observes its location and must communicate to learn the locations of other cars, thereby avoiding collisions. In setting A (TJ-A), $N=5$, $p_{arr} = 0.3$ and $t_{max}=20$. In setting B (TJ-B), $N=10$, $p_{arr} = 0.05$ and $t_{max}=40$. As for the training hyperparameters, $T_N=1000$, $\alpha_p=0.05$ and $T_{max}=1250$.

\subsection{Tested algorithms}

To test the effectiveness of our scheme, the tested algorithms are based on two  multi-agent communication frameworks: IC3NET \cite{singh2018learning} and TARMAC \cite{das2019tarmac}. In particular, we test four variants of each framework as described below. 

\textbf{Original(ORI)}: This is the original version of the framework without any modifications, which manifests baseline cooperation performance and entropy. 

\textbf{ZeroComm(ZC)}: This variant disables communication by compulsively setting all messages to zero and manifests performance without communication. 

\textbf{DifEM}: \cite{wang2020learning} A previous method that minimizes the differential entropy with a mutual information regularizer. 

\textbf{DisEM}: This is our proposed scheme which reduces entropy by adding a pseudo entropy regularizer to the original loss function.

\subsection{Main experiments results}
We conduct main experiments in these six settings with eight methods in total.  To enhance the reliability of the results, we repeatedly train each setting using each method five times and test each model for 500 episodes. Then we exhibit the averaged outcomes in Table~\ref{table:mainresults}. Here we stress that our DisEM scheme aim to achieve good cooperation performance while keeping the entropy low\footnote{TARMAC-based methods generate messages with less entropy than IC3NET-based ones, which results from the different settings in original papers. Default message size in IC3NET is 128, while in TARMAC it is 48.}. Note that in TH and PP scenarios, lower timesteps indicate better cooperation, while in TJ scenarios, higher success rates indicate better cooperation. 

We conclude several facts from the results. (a) ZC (ZeroComm) misbehaves in almost all settings, especially in TH-A and TH-B, where agents are entirely unable to complete tasks without communication. This reflects the importance of communication in these settings. (b) DifEM indeed reduces communication entropy at the cost of degrading performance more or less. (c) DisEM outperforms DifEM, managing to reduce entropy with little or no performance degradation. More precisely, it works pretty well in TH and PP settings, reducing entropy by $80\%-90\%$. However, in TJ-A setting, both TARMAC-DifEM and TARMAC-DisEM fail to reduce entropy. This is because the message distribution in this setting is hard to optimize. 

\subsection{Investigative experiment: reducing message length}
Reducing layer size is an efficient and common way to compress information, as is used in autoencoders \cite{kramer1991nonlinear}. Since the message vectors are generated by neural networks and their length is set manually before training, we can set their length to a lower value to reduce entropy. In this investigative experiment, we test whether reducing message length is better than our scheme in terms of reducing entropy. Specifically, we test IC3NET-Original and IC3NET-DisEM with different message lengths in Predator Prey environments and exhibit results in Fig.\ref{fig:reducelength}. Each point represents a model's performance with its x-coordinate representing entropy and its y-coordinate representing timesteps. Therefore, points located in the bottom left signify good performance and low entropy. In PP-A, DisEM significantly reduces communication entropy with slight performance degradation. In PP-B, DisEM reduces communication entropy massively while maintaining baseline performance. In conclusion, even though reducing the message lengths of IC3NET-Original can reduce entropy, combining it with our scheme is better.

\begin{figure}[htbp]
\centering
\begin{minipage}[t]{0.23\textwidth}
\centering
\includegraphics[width=1\textwidth]{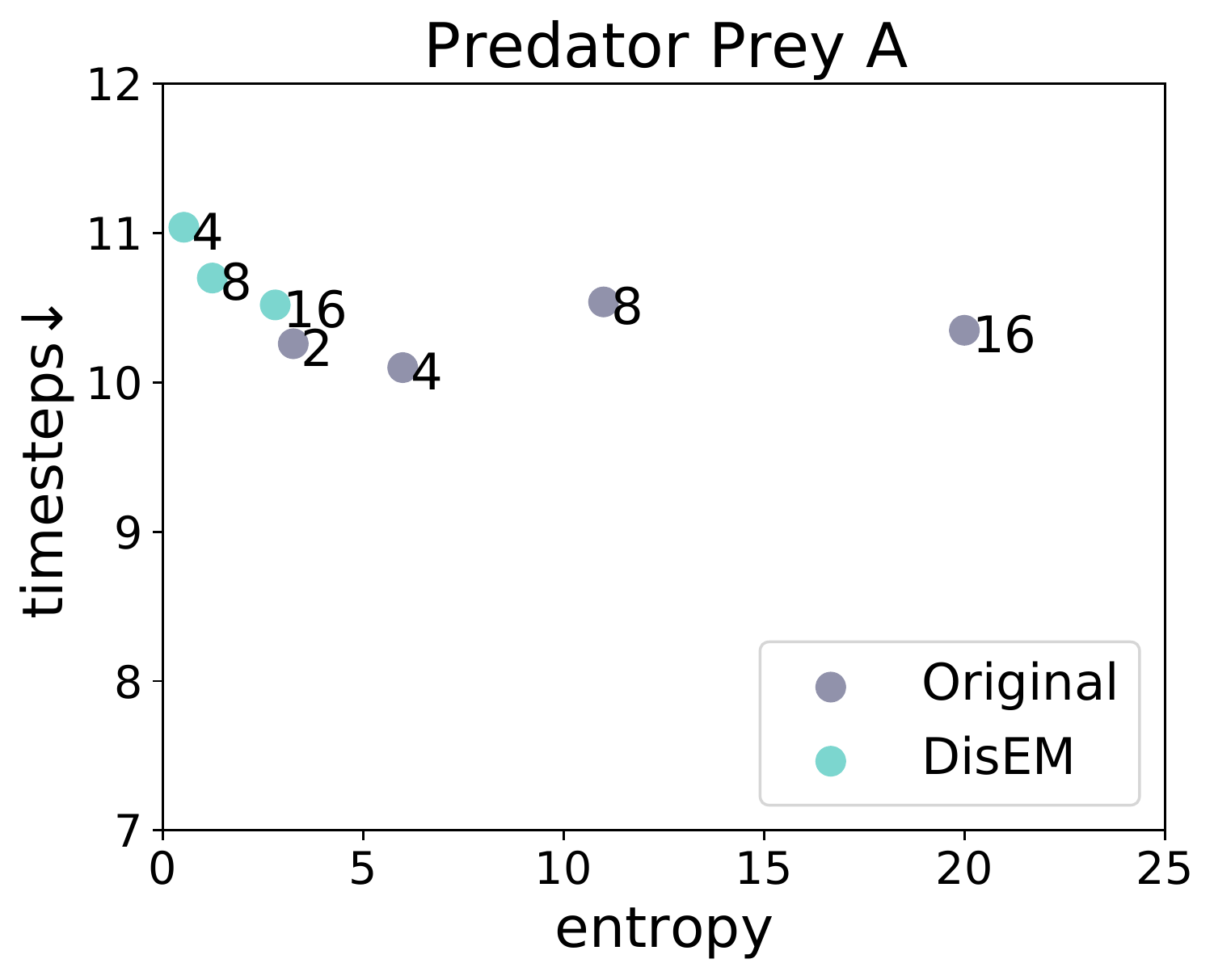}
\end{minipage}
\begin{minipage}[t]{0.23\textwidth}
\centering
\includegraphics[width=1\textwidth]{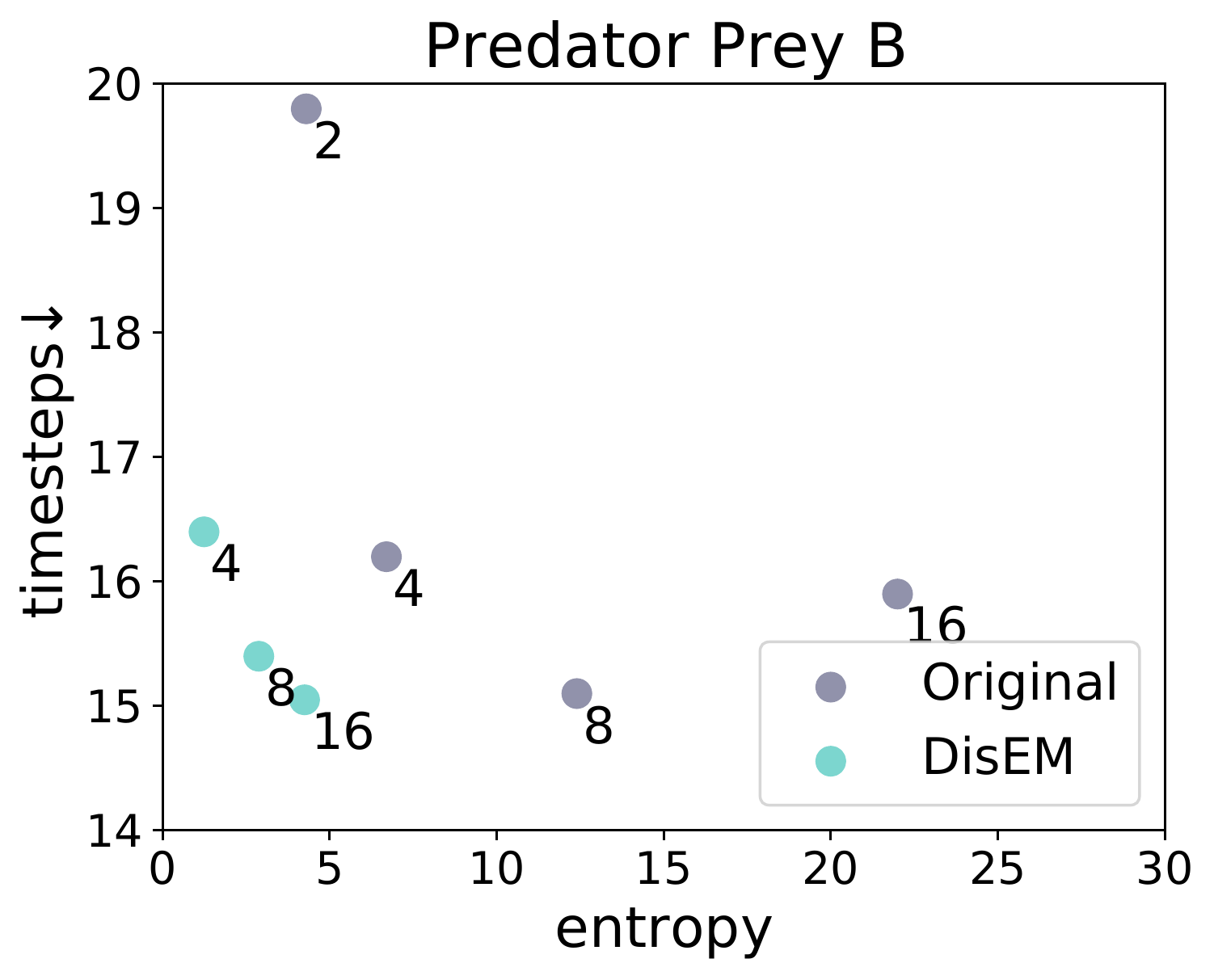}
\end{minipage}
\caption{Performance of IC3NET-Original and IC3NET-DisEM in Predator Prey environments. The number near a data point represents its message length, and the color of a data point indicates whether it is an IC3NET-Original model or an IC3NET-DisEM model. }
\label{fig:reducelength}
\end{figure}

\section{Conclusions}

In this paper, we propose a simple yet effective scheme, DisEM, to reduce communication entropy for common learning-based multi-agent communication frameworks. Firstly, we point out the necessity of minimizing the entropy of quantized messages in multi-agent communication. Secondly, to counter the problem that entropy cannot be optimized with gradient descent, we design pseudo gradient descent that reduces entropy by moving message variables from less popular quantization intervals to adjacent more popular ones. Thirdly, we prove the effectiveness of pseudo gradient descent theoretically. Fourthly, we conduct plenty of experiments to test our scheme. Concretely speaking, we test 8 variants of 2 base multi-agent communication frameworks, IC3NET and TARMAC, in six communication-critical environment settings. The results confirm the superiority of our scheme DisEM over the existing ones: DisEM can reduce the message entropy by up to 90$\%$ with nearly no loss of cooperation performance. We also conduct several investigative experiments and find that combining DisEM with reducing message sizes leads to lower message entropy. We hope our work can provide a foundation for more advanced research in efficient multi-agent communication.

\bibliographystyle{IEEEtran}
\bibliography{IEEEfull}
\vspace{12pt}

\end{document}